\newtheorem{thm}{Theorem}
\DeclareMathOperator{\argmin}{argmin}
\DeclareMathOperator{\rank}{rank}
\DeclareMathOperator{\rankdef}{rankdef}
\begin{document}

\title{Multishot Codes for Network Coding\\
using Rank-Metric Codes}

\author{\IEEEauthorblockN{Roberto W. Nóbrega and Bartolomeu F. Uchôa-Filho} \IEEEauthorblockA{Communications
Research Group\\
Department of Electrical Engineering\\
Federal University of Santa Catarina\\
Florianópolis, SC, 88040-900, Brazil\\
\{rwnobrega, uchoa\}@eel.ufsc.br}}
\maketitle
\begin{abstract}
The multiplicative-additive finite-field matrix channel arises as an adequate model
for linear network coding systems when links are subject to errors and erasures,
and both the network topology and the network code are unknown. In a previous work
we proposed a general construction of multishot codes for this channel based on the
multilevel coding theory. Herein we apply this construction to the rank-metric space,
obtaining multishot rank-metric codes which, by lifting, can be converted to codes
for the aforementioned channel. We also adapt well-known encoding and decoding algorithms
to the considered situation.
\end{abstract}

\section{Introduction\label{sec:introduction}}

Noncoherent linear network coding with unreliable links in a multicast scenario has
been given a good deal of attention since the seminal work of Koetter and Kschischang
\cite{koetter-kschischang}. The problem is suitably modeled by the \emph{multiplicative-additive
finite-field matrix channel} given by\begin{equation}
\mathbf{Y}=\mathbf{A}\mathbf{X}+\mathbf{Z},\label{eq:matrix-channel}\end{equation}
where $\mathbf{X}$ is the $N_{\mathrm{i}}$-by-$T$ channel input matrix, $\mathbf{Y}$
is the $N_{\mathrm{o}}$-by-$T$ channel output matrix, $\mathbf{A}$ is the $N_{\mathrm{o}}$-by-$N_{\mathrm{i}}$
multiplicative transfer matrix, and $\mathbf{Z}$ is the $N_{\mathrm{o}}$-by-$T$
additive error matrix. All matrices are over some finite field~$\mathbb{F}_{q}$.

In the context of network coding, $N_{\mathrm{i}}$ stands for the number of packets
transmitted by the source node while $N_{\mathrm{o}}$ stands for the number of packets
received by a given sink node at each time slot, a packet consisting of $T$~symbols
from $\mathbb{F}_{q}$; the matrices $\mathbf{X}$ and $\mathbf{Y}$ are then formed
by juxtaposing the $N_{\mathrm{i}}$ transmitted and the $N_{\mathrm{o}}$ received
packets, respectively, seen as row vectors. Matrix $\mathbf{A}$ represents the network
transfer matrix, which depends on the network topology and network code used (both
assumed unknown in a noncoherent scenario), and matrix $\mathbf{Z}$ is related to
link errors%
\footnote{If $N_{\mathrm{e}}$ error packets are injected into the network, and if we dispose
these packets to form an $N_{\mathrm{e}}$-by-$T$ matrix $\mathbf{Z}'$, then we
have that $\mathbf{Z}=\mathbf{D}\mathbf{Z}'$, for some $N_{\mathrm{o}}$-by-$N_{\mathrm{e}}$
transfer matrix $\mathbf{D}$. This decomposition, although meaningful in a general
context, is unimportant here.%
}. For simplicity, we set $N_{\mathrm{i}}=N_{\mathrm{o}}=N$ throughout this work.

To achieve a reliable communication over this channel, \emph{matrix codes} are employed.
So far, attention has mostly been given to \emph{one-shot} matrix codes, that is,
codes that use the matrix channel~\eqref{eq:matrix-channel} only once. In this
case, a code is simply a non-empty subset of matrices. Among the existing constructions
for one-shot codes, we highlight that of Silva \textsl{et al.}~\cite{silva-rank-metric},
in which matrix codes are obtained from \emph{rank-metric codes}. Rank-metric codes,
in turn, were already studied before (\textsl{e.g.}, by Gabidulin~\cite{gabidulin}
and Roth~\cite{roth-crisscross}) in distinct contexts.

In contrast, we herein consider blocks of $n$ consecutive uses of the matrix channel~\eqref{eq:matrix-channel}.
Under this framework, a code is now a non-empty subset of $n$-tuples of matrices;
we call it an \emph{$n$-shot }(or \emph{multishot})\emph{ }matrix code. Multishot
codes for network coding were studied in~\cite{multishot}, where a general construction
based on the well-known \emph{block coded modulation} and \emph{multilevel code construction}
of Imai and Hirakawa~\cite{imai-hirakawa} has been proposed. It has been shown
in~\cite{multishot} that multishot codes can correct more errors than one-shot
codes, which motivates this work.

In the present paper, we combine the rank-metric approach of Silva \textsl{et al.}
with the multilevel code construction of Imai and Hirakawa to obtain multishot codes
for network coding. The problem can also be interpreted within the theory of \emph{generalized
concatenated codes} of Blokh and Zyablov~\cite{generalized-concatenated}, and by
doing so it is possible to adapt well-known encoding and decoding procedures to the
case at hand (\textsl{cf.}~\cite[Chapter~15]{lin-costello}).

We begin in Section~\ref{sec:background} by providing background on concepts from
one-shot noncoherent network coding with errors, including an overview of rank-metric
codes applied to network coding. In Section~\ref{sec:multishot} we introduce multishot
matrix codes. Section~\ref{sec:multilevel} presents a brief review of the general
multilevel theory to construct block codes over arbitrary metric spaces. The main
contribution is Section~\ref{sec:multishot-rank-metric}, where we particularize
the multilevel theory to the rank-metric space, derive an encoding procedure based
on coset partitioning, and adapt a hard-decision multistage decoding algorithm to
our problem. Finally, Section~\ref{sec:conclusion} concludes the paper.

\section{Background\label{sec:background}}

\subsection{Error Model}

This work follows the approach of Silva, Kschischang, and Koetter~\cite{silva-rank-metric},
in which the adversities of the matrix channel~\eqref{eq:matrix-channel} come in
two flavors: the rank-deficiency of the multiplicative transfer matrix and the rank
of the additive error matrix. These are constrained to\begin{equation}
\rankdef\mathbf{A}\leq\rho\qquad\mbox{and}\qquad\rank\mathbf{Z}\leq\tau,\label{eq:adversities}\end{equation}
where $\rho$ and $\tau$ are integer parameters. While $\rho$ upper bounds the
collective effect of unfortunate choices of linear combination coefficients for the
network code, wrong min-cut estimation, and packet erasures, $\tau$ measures the
maximum number of error packets injected into the network.

\subsection{Matrix Codes}

To fulfill an error-free communication over the matrix channel defined by~\eqref{eq:matrix-channel}
even under the adversities described in~\eqref{eq:adversities}, matrix codes are
employed. A (one-shot) matrix code~$\mathcal{X}$ is a non-empty subset of $\mathbb{F}_{q}^{N\times T}$.
Its \emph{code} \emph{rate} is given by\[
R(\mathcal{X})=\frac{\log_{q}\left|\mathcal{X}\right|}{NT},\]
with $0\leq R(\mathcal{X})\leq1$.

Let $\mathbf{X}$ be any codeword of a matrix code $\mathcal{X}\subseteq\mathbb{F}_{q}^{N\times T}$
and $\mathbf{Y}$ be the corresponding channel output according to~\eqref{eq:matrix-channel}.
A code~$\mathcal{X}$ is said to be \emph{$(\rho,\tau)$-correcting} if $\mathbf{X}$
can be unambiguously determined from $\mathbf{Y}$ for all choices of $\mathbf{A}$
and $\mathbf{Z}$ subject to~\eqref{eq:adversities}.

\subsection{Subspace Coding}

In \cite{silva-rank-metric}, Silva~\textsl{et~al.} obtained a sufficient condition
for the success of one-shot matrix codes under the presumed error model. Their result
asserts that a one-shot matrix code $\mathcal{X}$ is \emph{$(\rho,\tau)$-}correcting
if $d_{\mathrm{S}}(\left\langle \mathcal{X}\right\rangle )>2(2\tau+\rho).$ In this
inequality,\[
\left\langle \mathcal{X}\right\rangle =\{\left\langle \mathbf{X}\right\rangle :\mathbf{X}\in\mathcal{X}\},\]
(where $\left\langle \mathbf{X}\right\rangle $ stands for the vector subspace spanned
by the rows of matrix~$\mathbf{X}$) is the\emph{ subspace code} obtained from~$\mathcal{X}$,
while $d_{\mathrm{S}}(\left\langle \mathcal{X}\right\rangle )$ is the \emph{minimum
subspace distance} of $\left\langle \mathcal{X}\right\rangle $. The\emph{ subspace
distance} between two subspaces $U$ and $V$ is defined as \begin{equation}
d_{\mathrm{S}}(U,V)=\dim(U\dotplus V)-\dim(U\cap V)\label{eq:subspace-distance}\end{equation}
where $U\dotplus V$ is the sum subspace and $U\cap V$ is the intersection subspace.
This result reinforces the idea of \emph{transmission via subspace selection} (\textsl{i.e.},
subspace coding), as proposed in~\cite{koetter-kschischang}.

We note that in~\cite{silva-metrics}, Silva and Kschischang obtained a \emph{necessary
and sufficient} condition for a matrix code $\mathcal{X}$ to be $(\rho,\tau)$-correcting,
namely, $d_{\mathrm{I}}(\left\langle \mathcal{X}\right\rangle )>2\tau+\rho$, where
$d_{\mathrm{I}}(\cdot,\cdot)$ is called the \emph{injection distance}. Nevertheless,
we still stick to the subspace distance for mathematical simplicity. Moreover, the
proposed multilevel construction gives rise to a multishot matrix code with constant-dimension
spanned subspaces, in which case the injection distance and the subspace distance
are essentially the same.

\subsection{Rank-Metric Approach to Network Coding}

In~\cite{silva-rank-metric}, Silva~\textsl{et~al.} proposed a method to design
one-shot matrix codes based on \emph{rank-metric codes}~\cite{gabidulin,roth-crisscross}.
A rank-metric code is a block code $\mathcal{R}\subseteq\mathbb{F}_{q^{M}}^{N}$
in which the metric of concern is the \emph{rank distance} (as opposed to the Hamming
distance). The rank distance between $\mathbf{u},\mathbf{v}\in\mathbb{F}_{q^{M}}^{N}$
is defined as \begin{equation}
d_{\mathrm{R}}(\mathbf{u},\mathbf{v})=\rank(\underline{\mathbf{v}}-\underline{\mathbf{u}}),\label{eq:rank-distance}\end{equation}
where $\underline{\mathbf{u}}\in\mathbb{F}_{q}^{N\times M}$ is the matrix whose
rows are the $M$-tuples representing each of the elements of $\mathbf{u}\in\mathbb{F}_{q^{M}}^{N}$
according to some fixed basis for $\mathbb{F}_{q^{M}}$ over $\mathbb{F}_{q}$; the
rank distance is indeed a metric~\cite{gabidulin}. A matrix code $\mathcal{X}\subseteq\mathbb{F}_{q}^{N\times T}$
can be obtained from a rank-metric code $\mathcal{R}\subseteq\mathbb{F}_{q^{M}}^{N}$
by means of a simple operation called \emph{lifting}, denoted by $\mathcal{I}(\cdot)$
and defined by%
\footnote{Note that this definition differs from that of Silva \textsl{et al.}~\cite{silva-rank-metric},
where the lifting of a matrix $\mathbf{U}$ is the vector subspace spanned by $[\mathbf{I}|\mathbf{U}]$.%
}\begin{eqnarray*}
\mathcal{I}:\mathbb{F}_{q^{M}}^{N} & \longrightarrow & \mathbb{F}_{q}^{N\times T}\\
\mathbf{u} & \longmapsto & [\mathbf{I}|\underline{\mathbf{u}}],\end{eqnarray*}
where $\mathbf{I}$ is the $N\times N$ identity matrix and $T=N+M$.

The significance of the lifting operation resides in the fact that, for $\mathbf{u},\mathbf{v}\in\mathbb{F}_{q^{M}}^{N}$,
\[
d_{\mathrm{S}}(\left\langle \mathcal{I}(\mathbf{u})\right\rangle ,\left\langle \mathcal{I}(\mathbf{v})\right\rangle )=2d_{\mathrm{R}}(\mathbf{u},\mathbf{v}).\]
Thus, if $\mathcal{R}\subseteq\mathbb{F}_{q^{M}}^{N}$ is a rank-metric code then
$\mathcal{X}=\mathcal{I}(\mathcal{R})\subseteq\mathbb{F}_{q}^{N\times T}$ (obtained
by lifting each codeword of $\mathcal{R}$) gives rise to a matrix code~$\mathcal{X}$
with $\left|\mathcal{X}\right|=\left|\mathcal{R}\right|$ and $d_{\mathrm{S}}(\left\langle \mathcal{X}\right\rangle )=2d_{\mathrm{R}}(\mathcal{R})$~\cite{silva-rank-metric}.

The problem of obtaining good one-shot codes for network coding could then be reduced
to finding good rank-metric codes. But that latter task was already investigated
by Gabidulin~\cite{gabidulin}. He first proved that the Singleton bound is also
valid for rank-metric codes, that is, for every $[N,K,D]$ linear rank-metric code
over~$\mathbb{F}_{q^{M}}$ the condition $D\leq N-K+1$ must hold; he called codes
achieving this bound {}``maximum rank distance codes.'' Then he constructed a family
of maximum rank distance codes, provided $N\leq M$. Even more, Gabidulin described
encoding and decoding algorithms for his constructed codes, mainly based on the Reed-Solomon
coding theory \cite[Chapter~7]{lin-costello}.

Again in~\cite{silva-rank-metric}, Silva \textsl{et al.} adapted Gabidulin's decoding
algorithm for the matrix channel~\eqref{eq:matrix-channel}. In other words, a method
is proposed to solve the following problem: given a received matrix $\mathbf{Y}\in\mathbb{F}_{q}^{T\times M}$
and a Gabidulin code $\mathcal{R}\subseteq\mathbb{F}_{q^{M}}^{N}$, find $\hat{\mathbf{u}}\in\mathcal{R}$
such that\[
\mathbf{\hat{\mathbf{u}}}=\underset{\mathbf{u}\in\mathcal{R}}{\argmin\ }d_{\mathrm{S}}(\left\langle \mathbf{X}\right\rangle ,\left\langle \mathbf{Y}\right\rangle )\]
where $\mathbf{X}=\mathcal{I}(\mathbf{u})$. The first step of the method is to decompose
matrix $\mathbf{Y}$ into a triplet $(\mathbf{r},\mathbf{L}',\mathbf{E}')\in\mathbb{F}_{q^{M}}^{N}\times\mathbb{F}_{q}^{N\times\mu}\times\mathbb{F}_{q}^{\delta\times M}$,
operation therein called \emph{reduction} (which can be interpreted as the opposite
of lifting). Decoding then proceeds similarly to standard Gabidulin decoding of $\mathbf{r}$
for code~$\mathcal{R}$, except that \emph{side information} $\mathbf{L}'$ and
$\mathbf{E}'$ is passed to the decoder. For more details (such as the meanings of
matrices $\mathbf{L}'$ and $\mathbf{E}'$) we refer the reader to~\cite{silva-rank-metric}.

\section{Multishot Matrix Codes\label{sec:multishot}}

We finally introduce multishot matrix codes for error control in noncoherent network
coding.

\subsection{Motivation}

One of the basic problems in the realm of one-shot matrix coding is to find codes
with good rates and good error-correcting capabilities. To achieve both goals simultaneously,
it may be unavoidable to increase the field size, $q$, or the packet size, $T$.
Multishot codes allow for a third possibility: to increase the number of channel
uses, $n$. 

With that in mind, multishot codes are attractive when the system under consideration
is such that it is not possible to change the field and packet size. This is true,
for example, in fast-topology changing networks (such as wireless ones), where the
transfer matrix doesn't stay the same for much long. Under this circumstance, to
obtain codes with better error-correcting capabilities we must spread redundancy
across multiple shots. Put another way, when errors occur in a random fashion and
$q$ and $T$ are fixed, we must use the matrix channel many times in order to approach
the channel capacity.

For a simple example in which a multishot code is capable of detecting more errors
when compared with the best one could do by simply repeating one-shot codes, we point
the reader to~\cite{multishot}.

\subsection{Model and Definitions}

In this work we adopt a \emph{block-coding} approach, in which the matrix channel~\eqref{eq:matrix-channel}
is used $n$ times in a row. Our channel model then becomes\begin{equation}
\mathbf{Y}_{j}=\mathbf{A}_{j}\mathbf{X}_{j}+\mathbf{Z}_{j},\label{eq:matrix-channel-ext}\end{equation}
for $j=0,\ldots,n-1$, with matrices $\mathbf{X}_{j}$, $\mathbf{Y}_{j}$, $\mathbf{A}_{j}$,
and $\mathbf{Z}_{j}$ retaining their dimensions from the one-shot case. We allow
the adversities to be spread in any way among the $n$ time slots:

\begin{equation}
\sum_{j=0}^{n-1}\rankdef\mathbf{A}_{j}\leq\rho\qquad\mbox{and}\qquad\sum_{j=0}^{n-1}\rank\mathbf{Z}_{j}\leq\tau.\label{eq:adversities-ext}\end{equation}

An $n$-\emph{shot} (or \emph{multishot})\emph{ matrix code} $\boldsymbol{\mathcal{X}}$
is a non-empty subset of $(\mathbb{F}_{q}^{N\times T})^{n}$. Its \emph{code} \emph{rate}
is defined as the ratio between the amount of information symbols conveyed by the
transmission of a codeword and the amount of physical symbols spent by each codeword,
that is,\[
R(\boldsymbol{\mathcal{X}})=\frac{\log_{q}\left|\boldsymbol{\mathcal{X}}\right|}{nNT};\]
we have $0\leq R(\boldsymbol{\mathcal{X}})\leq1$. Akin to the one-shot case, a multishot
code $\boldsymbol{\mathcal{X}}$ is said to be \emph{$(\rho,\tau)$-correcting} if
$(\mathbf{X}_{0},\ldots,\mathbf{X}_{n-1})$ can be unambiguously determined from
$(\mathbf{Y}_{0},\ldots,\mathbf{Y}_{n-1})$ for all choices of $(\mathbf{A}_{0},\ldots,\mathbf{A}_{n-1})$
and $(\mathbf{Z}_{0},\ldots,\mathbf{Z}_{n-1})$ subject to~\eqref{eq:adversities-ext}.

\subsection{The Extended Subspace Distance}

We extend the subspace distance to\[
d_{\mathrm{S}}(\boldsymbol{U},\boldsymbol{V})=\sum_{j=0}^{n-1}d_{\mathrm{S}}(U_{j},V_{j}),\]
where $\boldsymbol{U}=(U_{0},\ldots,U_{n-1})$ and $\boldsymbol{V}=(V_{0},\ldots,V_{n-1})$
are $n$-tuples of subspaces of $\mathbb{F}_{q}^{N}$ and $d_{\mathrm{S}}(\cdot,\cdot)$
in the right-hand side is given by~\eqref{eq:subspace-distance}. We now state a
multishot counterpart for the result of Silva~\textsl{et.~al} presented in Section~\ref{sec:background}.
\begin{thm}
Let $\boldsymbol{\mathcal{X}}$ be a multishot matrix code. If $d_{\mathrm{S}}(\left\langle \boldsymbol{\mathcal{X}}\right\rangle )>2(2\rho+\tau)$
then $\boldsymbol{\mathcal{X}}$ is $(\rho,\tau)$-correcting.\end{thm}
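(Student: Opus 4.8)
The plan is to deduce the multishot statement from the one-shot result of Silva~\textsl{et~al.}\ recalled in Section~\ref{sec:background}, rather than to reanalyze the channel from scratch. The device is a \emph{block-diagonal embedding}: to each multishot codeword $\boldsymbol{X}=(\mathbf{X}_{0},\ldots,\mathbf{X}_{n-1})\in\boldsymbol{\mathcal{X}}$ I associate the single matrix $\widehat{\mathbf{X}}=\mathrm{diag}(\mathbf{X}_{0},\ldots,\mathbf{X}_{n-1})\in\mathbb{F}_{q}^{nN\times nT}$, and I let $\widehat{\boldsymbol{\mathcal{X}}}$ be the one-shot matrix code (with parameters $nN$ and $nT$) formed by all these matrices. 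The map $\boldsymbol{X}\mapsto\widehat{\mathbf{X}}$ is injective, so distinct multishot codewords have distinct images.

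First I would check that the embedding is faithful to both the channel and the adversity model. Writing $\widehat{\mathbf{A}}=\mathrm{diag}(\mathbf{A}_{0},\ldots,\mathbf{A}_{n-1})$ and $\widehat{\mathbf{Z}}=\mathrm{diag}(\mathbf{Z}_{0},\ldots,\mathbf{Z}_{n-1})$, the $n$ equations~\eqref{eq:matrix-channel-ext} collapse into the single equation $\widehat{\mathbf{Y}}=\widehat{\mathbf{A}}\widehat{\mathbf{X}}+\widehat{\mathbf{Z}}$ of the form~\eqref{eq:matrix-channel}, with $\widehat{\mathbf{Y}}=\mathrm{diag}(\mathbf{Y}_{0},\ldots,\mathbf{Y}_{n-1})$. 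Since $\rank$ and $\rankdef$ are \emph{additive} over block-diagonal matrices, the multishot budget~\eqref{eq:adversities-ext} is equivalent to $\rankdef\widehat{\mathbf{A}}\leq\rho$ together with $\rank\widehat{\mathbf{Z}}\leq\tau$; thus admissible multishot realizations correspond precisely to admissible one-shot realizations of the stacked channel. Consequently, were two distinct multishot codewords to produce a common output under~\eqref{eq:adversities-ext}, their images in $\widehat{\boldsymbol{\mathcal{X}}}$ would produce a common output under the one-shot constraints, so $(\rho,\tau)$-correctability of $\widehat{\boldsymbol{\mathcal{X}}}$ forces $(\rho,\tau)$-correctability of $\boldsymbol{\mathcal{X}}$. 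Next I would match the distances: the row space of a block-diagonal matrix splits as $\left\langle \widehat{\mathbf{X}}\right\rangle =\left\langle \mathbf{X}_{0}\right\rangle \oplus\cdots\oplus\left\langle \mathbf{X}_{n-1}\right\rangle $ over disjoint coordinate blocks, and a one-line computation from~\eqref{eq:subspace-distance} shows that both $\dim(\cdot\dotplus\cdot)$ and $\dim(\cdot\cap\cdot)$ are additive over such decompositions, whence $d_{\mathrm{S}}(\left\langle \widehat{\mathbf{X}}\right\rangle ,\left\langle \widehat{\mathbf{X}}'\right\rangle )=\sum_{j=0}^{n-1}d_{\mathrm{S}}(\left\langle \mathbf{X}_{j}\right\rangle ,\left\langle \mathbf{X}'_{j}\right\rangle )$, i.e.\ the extended subspace distance $d_{\mathrm{S}}(\left\langle \boldsymbol{X}\right\rangle ,\left\langle \boldsymbol{X}'\right\rangle )$. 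Minimizing over distinct codeword pairs gives $d_{\mathrm{S}}(\left\langle \widehat{\boldsymbol{\mathcal{X}}}\right\rangle )=d_{\mathrm{S}}(\left\langle \boldsymbol{\mathcal{X}}\right\rangle )$.

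With both facts in hand the theorem is immediate: the hypothesis on $d_{\mathrm{S}}(\left\langle \boldsymbol{\mathcal{X}}\right\rangle )$ transfers verbatim to $d_{\mathrm{S}}(\left\langle \widehat{\boldsymbol{\mathcal{X}}}\right\rangle )$, which by the one-shot criterion of Silva~\textsl{et~al.}\ makes $\widehat{\boldsymbol{\mathcal{X}}}$ a $(\rho,\tau)$-correcting one-shot code, and hence $\boldsymbol{\mathcal{X}}$ is $(\rho,\tau)$-correcting. I expect the only delicate point to be the adversity bookkeeping in the second paragraph: one must ensure that the embedding does not oblige the one-shot code to defeat a \emph{strictly larger} family of channel realizations than the multishot code faces. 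This is exactly why block-diagonal (rather than, say, vertical) stacking is used --- it makes $\rank$ and $\rankdef$ behave additively rather than merely subadditively, so the correspondence of constraints is an equality with no slack. A more self-contained alternative would skip the reduction and instead extract from~\cite{silva-rank-metric} the per-shot geometric bound $d_{\mathrm{S}}(\left\langle \mathbf{X}_{j}\right\rangle ,\left\langle \mathbf{Y}_{j}\right\rangle )\leq\rankdef\mathbf{A}_{j}+2\rank\mathbf{Z}_{j}$, apply it to both candidate codewords in every shot, invoke the triangle inequality for $d_{\mathrm{S}}$ shot by shot, and sum over $j$ using~\eqref{eq:adversities-ext}; but that route forces us to re-prove a lemma which the reduction lets us cite wholesale.
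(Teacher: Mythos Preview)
Your argument is correct, but it is not the route the paper takes. The paper proves the theorem directly: it fixes a transmitted codeword $\boldsymbol{X}$ and the received $\boldsymbol{Y}$, applies the triangle inequality shot by shot through the intermediate point $\langle\mathbf{A}_{j}\mathbf{X}_{j}\rangle$, invokes the per-shot bounds $d_{\mathrm{S}}(\langle\mathbf{X}_{j}\rangle,\langle\mathbf{A}_{j}\mathbf{X}_{j}\rangle)\leq\rankdef\mathbf{A}_{j}$ and $d_{\mathrm{S}}(\langle\mathbf{A}_{j}\mathbf{X}_{j}\rangle,\langle\mathbf{Y}_{j}\rangle)\leq 2\rank\mathbf{Z}_{j}$, sums over~$j$, and concludes by minimum-distance decoding. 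This is precisely the ``self-contained alternative'' you sketch in your last sentence; the paper does \emph{not} re-prove those per-shot inequalities but lifts them from~\cite{silva-rank-metric} just as you would.

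Your block-diagonal embedding is a genuinely different device. It trades the short triangle-inequality computation for a structural reduction: once you verify that $\rank$, $\rankdef$, and $d_{\mathrm{S}}$ are additive over the block decomposition, the one-shot theorem of Silva~\textsl{et~al.}\ is invoked as a black box. The payoff is modularity---any sharpening of the one-shot sufficient condition (for instance, replacing the subspace distance by the injection distance as in~\cite{silva-metrics}) would propagate to the multishot case without reopening the proof. The paper's direct approach, by contrast, is shorter and makes the decoding rule explicit (nearest subspace-tuple), at the cost of re-tracing the geometry that the one-shot theorem already packages. Your caution about the adversity bookkeeping is well placed and correctly resolved: block-diagonal stacking turns the multishot budget into an \emph{equality} with the one-shot budget on $\widehat{\mathbf{A}}$ and $\widehat{\mathbf{Z}}$, so the embedded code is only asked to defeat a \emph{subset} of the one-shot adversaries, which is the inclusion you need.
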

\begin{proof}
The proof is a simple generalization of~\cite[Theorem~1]{silva-rank-metric}. Let
$\boldsymbol{X}=(\mathbf{X}_{0},\ldots,\mathbf{X}_{n-1})$ be the transmitted codeword
and $\boldsymbol{Y}=(\mathbf{Y}_{0},\ldots,\mathbf{Y}_{n-1})$ be the received sequence,
according to~\eqref{eq:matrix-channel-ext}. Let $\left\langle \boldsymbol{X}\right\rangle =(\left\langle \mathbf{X}_{0}\right\rangle ,\ldots,\left\langle \mathbf{X}_{n-1}\right\rangle )$
and $\left\langle \boldsymbol{Y}\right\rangle =(\left\langle \mathbf{Y}_{0}\right\rangle ,\ldots,\left\langle \mathbf{Y}_{n-1}\right\rangle )$
. We have\begin{gather*}
\begin{array}{l}
{\displaystyle d_{\mathrm{S}}(\left\langle \boldsymbol{X}\right\rangle ,\left\langle \boldsymbol{Y}\right\rangle )=\sum_{j=0}^{n-1}d_{\mathrm{S}}(\left\langle \mathbf{X}_{j}\right\rangle ,\left\langle \mathbf{Y}_{j}\right\rangle )}\\
{\displaystyle \quad\leq\sum_{j=0}^{n-1}d_{\mathrm{S}}(\left\langle \mathbf{X}_{j}\right\rangle ,\left\langle \mathbf{A}_{j}\mathbf{X}_{j}\right\rangle )+\sum_{j=0}^{n-1}d_{\mathrm{S}}(\left\langle \mathbf{A}_{j}\mathbf{X}_{j}\right\rangle ,\left\langle \mathbf{Y}_{j}\right\rangle )}\\
{\displaystyle \quad\leq\sum_{j=0}^{n-1}\rankdef\mathbf{A}_{j}+2\sum_{j=0}^{n-1}\rank\mathbf{Z}_{j}}\\
{\displaystyle \quad\leq\rho+2\tau.}\end{array}\end{gather*}
Since $\rho+2\tau<d_{\mathrm{S}}(\left\langle \boldsymbol{\mathcal{X}}\right\rangle )/2$,
a minimum extended subspace distance decoder is guaranteed to yield $\left\langle \boldsymbol{X}\right\rangle $
given $\left\langle \boldsymbol{Y}\right\rangle $.
\end{proof}

\subsection{Multishot Rank-Metric Codes}

Let $\boldsymbol{u}=(\mathbf{u}_{0},\ldots,\mathbf{u}_{n-1})$ and $\boldsymbol{v}=(\mathbf{v}_{0},\ldots,\mathbf{v}_{n-1})$
be two $n$-tuples of vectors in $\mathbb{F}_{q^{M}}^{N}$. The \emph{extended rank
distance} between them is defined by\[
d_{\mathrm{R}}(\boldsymbol{u},\boldsymbol{v})=\sum_{j=0}^{n-1}d_{\mathrm{R}}(\mathbf{u}_{j},\mathbf{v}_{j}),\]
where $d_{R}(\cdot,\cdot)$ in the right-hand side is the rank distance as defined
in~\eqref{eq:rank-distance}. Just like regular rank-metric codes, multishot rank-metric
codes can be applied to noncoherent network coding. To this end, we use an extended
version of the lifting operation defined as $\mathcal{I}(\boldsymbol{u})=(\mathcal{I}(\mathbf{u}_{0}),\ldots,\mathcal{I}(\mathbf{u}_{n-1})),$
where $\boldsymbol{u}=(\mathbf{u}_{0},\ldots,\mathbf{u}_{n-1})$. It is straightforward
to show that\[
d_{\mathrm{S}}(\left\langle \mathcal{I}(\boldsymbol{u})\right\rangle ,\left\langle \mathcal{I}(\boldsymbol{v})\right\rangle )=2d_{\mathrm{R}}(\boldsymbol{u},\boldsymbol{v}).\]
Thus, a multishot rank-metric code $\boldsymbol{\mathcal{R}}\subseteq(\mathbb{F}_{q^{M}}^{N})^{n}$
gives rise to a multishot matrix code $\boldsymbol{\mathcal{X}}\subseteq(\mathbb{F}_{q}^{N\times T})^{n}$
(where $T=N+M$) defined by $\boldsymbol{\mathcal{X}}=\mathcal{I}(\boldsymbol{\mathcal{R}})=\{\mathcal{I}(\boldsymbol{u}):\boldsymbol{u}\in\boldsymbol{\mathcal{R}}\}$,
with $\left|\boldsymbol{\mathcal{X}}\right|=\left|\boldsymbol{\mathcal{R}}\right|$
and $d_{\mathrm{S}}(\left\langle \boldsymbol{\mathcal{X}}\right\rangle )=2d_{\mathrm{R}}(\boldsymbol{\mathcal{R}})$.

\section{General Multilevel Code Construction\label{sec:multilevel}}

The \emph{multilevel code construction} was proposed by Imai and Hirakawa~\cite{imai-hirakawa}
in~1977 and became very popular in the 80's and 90's with more general constructions
being developed by many other researchers. Although originally targeted at codes
over a given signal set $\mathcal{S}$ of the Euclidean space, the construction can
be generalized for block codes over any finite subset~$\mathcal{S}$ of a given
metric space~$\mathcal{M}$ with associated distance~$d_{\mathrm{M}}(\cdot,\cdot)$.
(This is true as long as the component codes are Hamming-metric.) Next, we base our
description of the multilevel construction on the work of Lin and Costello~\cite[Chapter~19]{lin-costello},
wherein many references on this subject are listed.

Given a set~$\mathcal{S}$, an \emph{$m$-level partitioning} of $\mathcal{S}$
is defined by a sequence of $m+1$ partitions $\Gamma_{0},\ldots,\Gamma_{m}$ of~$\mathcal{S}$
such that $\Gamma_{0}=\{\mathcal{S}\}$ and, for $1\leq i\leq m$, partition~$\Gamma_{i}$
is a \emph{refinement} of partition~$\Gamma_{i-1}$, in the sense that the subsets
in $\Gamma_{i}$ are subsubsets of the subsets in $\Gamma_{i-1}$. It is possible
to represent an $m$-level partitioning by a \emph{rooted tree} with $m+1$ levels,
labeled from $0$ to $m$. The nodes at level~$i$ are the subsets in the partition
$\Gamma_{i}$. The unique node at level $0$ is called the the \emph{root node} (which
is the set~$\mathcal{S}$) while the nodes at level $m$ are called the \emph{leaf
nodes}. A node $\mathcal{Y}\in\Gamma_{i}$ is a \emph{child} of the only element
$\mathcal{X}\in\Gamma_{i-1}$ such that $\mathcal{Y}\subseteq\mathcal{X}$. Equivalently,
a node $\mathcal{Y}\in\Gamma_{i}$ is the \emph{parent} of every node $\mathcal{Z}\in\Gamma_{i+1}$
such that $\mathcal{Z}\subseteq\mathcal{Y}$.

A level $i$ is said to be \emph{nested} if every node in this level has the same
number $p_{i}$ of children, although we do allow the partitions at level $i+1$
to have different cardinalities. (Note that, by this definition, level $0$ is always
nested in any partitioning.) In our construction of multishot codes we require level
$i$ to be nested for $0\leq i<m$. The edges joining a subset at level $i$ to subsets
at level $i+1$ in the tree can then be labeled with the numbers $0,\ldots,p_{i}-1$.
We denote by $Q(c^{(0)},\ldots,c^{(m-1)})$ the subset of nodes in $\Gamma_{m}$
reached by following the path $(c^{(0)},\ldots,c^{(m-1)})$ in the tree, where $0\leq c^{(i)}<p_{i}$
for $0\leq i<m$.

Consider now a metric space $\mathcal{M}$ with distance $d_{\mathrm{M}}(\cdot,\cdot)$
and let $\mathcal{S}\subseteq\mathcal{M}$ be a finite subset of $\mathcal{M}$.
We now describe the procedure to construct a block code~$\boldsymbol{\mathcal{C}}$
over $\mathcal{S}$ of length~$n$. Let $\Gamma_{0},\ldots,\Gamma_{m}$ be an $m$-level
partitioning of~$\mathcal{S}$, with level~$i$ nested for $0\leq i<m$. We define
the \emph{intrasubset distance} of level $i$ as\[
d_{\mathrm{M}}^{(i)}=\min\{d_{\mathrm{M}}(\mathcal{Y}):\mathcal{Y}\in\Gamma_{i}\},\]
for $0\leq i<m$. All levels $0\leq i<m$ must be {}``protected'' by classical
codes of length~$n$ over $\mathbb{F}_{p_{i}}$, called \emph{component codes} and
denoted by~$\mathcal{H}_{i}$, with minimum Hamming distances\[
d_{\mathrm{H}}^{(i)}=d_{\mathrm{H}}(\mathcal{H}_{i}).\]

The codewords of $\boldsymbol{\mathcal{C}}\subseteq\mathcal{M}^{n}$ are obtained
as follows.
\begin{enumerate}
\item Form all possible arrays of $m$~rows and $n$~columns\[
\boldsymbol{\Lambda}=\left[\begin{array}{cccc}
c_{0}^{(0)} & c_{1}^{(0)} & \cdots & c_{n-1}^{(0)}\\
c_{0}^{(1)} & c_{1}^{(1)} & \cdots & c_{n-1}^{(1)}\\
\vdots & \vdots & \ddots & \vdots\\
c_{0}^{(m-1)} & c_{1}^{(m-1)} & \cdots & c_{n-1}^{(m-1)}\end{array}\right],\]
where the $i$-th row of $\boldsymbol{\Lambda}$ is a codeword $\mathbf{c}^{(i)}=(c_{0}^{(i)},\ldots,c_{n-1}^{(i)})$
of code~$\mathcal{H}_{i}$, for $0\leq i<m$. The set of all such arrays is denoted
by~$\mathcal{A}$ and has cardinality $\left|\mathcal{A}\right|=\prod_{i=0}^{m-1}|\mathcal{H}_{i}|$.
\item The $j$-th column $\mathbf{c}_{j}=(c_{j}^{(0)},\ldots,c_{j}^{(m-1)})$ of a given
array $\mathbf{\boldsymbol{\Lambda}}\in\mathcal{A}$ specifies a path in the rooted
tree, starting at $\mathcal{S}\in\Gamma_{0}$ and ending at $Q(\mathbf{c}_{j})\in\Gamma_{m}$,
for $0\leq j\leq n$.
\item Each array $\mathbf{\boldsymbol{\Lambda}}\in\mathcal{A}$ gives rise to a set of
codewords:\[
\boldsymbol{\mathcal{C}}_{\boldsymbol{\Lambda}}=Q(\mathbf{c}_{0})\times Q(\mathbf{c}_{1})\times\cdots\times Q(\mathbf{c}_{n-1}).\]

\item Finally, the constructed code $\boldsymbol{\mathcal{C}}$ is the union of all such
(disjoint) sets:\[
\boldsymbol{\mathcal{C}}=\bigcup_{\boldsymbol{\Lambda}\in\mathcal{A}}\boldsymbol{\mathcal{C}}_{\boldsymbol{\Lambda}}.\]

\end{enumerate}
Accordingly, the total number of codewords in the constructed code $\boldsymbol{\mathcal{C}}$
is\[
\left|\boldsymbol{\mathcal{C}}\right|=\sum_{\boldsymbol{\Lambda}\in\mathcal{A}}\ \prod_{0\leq j<n}\left|Q(\mathbf{c}_{j})\right|.\]
For the case when $\left|\mathcal{Y}\right|=1$ for all $\mathcal{Y}\in\Gamma_{m}$,
each array $\boldsymbol{\Lambda}\in\mathcal{A}$ gives rise to exactly one codeword
and the encoding procedure can be represented by Figure~\ref{fig:Diagram1}. In
this case,\begin{equation}
\left|\boldsymbol{\mathcal{C}}\right|=\left|\mathcal{A}\right|=\prod_{i=0}^{m}\left|\mathcal{H}_{i}\right|.\label{eq:multilevel-cardinality-1}\end{equation}
Also, from multilevel theory~\cite{lin-costello}, the minimum (extended) distance
of the constructed code~$\boldsymbol{\mathcal{C}}$ is lower-bounded by\begin{equation}
d_{\mathrm{M}}(\boldsymbol{\mathcal{C}})\geq\min\{d_{\mathrm{M}}^{(i)}d_{\mathrm{H}}^{(i)}:0\leq i<m\}.\label{eq:multilevel-distance}\end{equation}
\begin{figure}
\begin{centering}
\psfrag{H0}[c][B][0.9][0]{$\mathcal{H}_0$}
\psfrag{H1}[c][B][0.9][0]{$\mathcal{H}_1$}
\psfrag{Hm}[c][B][0.9][0]{$\mathcal{H}_{m-1}$}
\psfrag{vdots}[][][0.9][0]{$\vdots$}
\psfrag{c0}[c][][0.9][0]{$\mathbf{c}^{(0)}$}
\psfrag{c1}[c][][0.9][0]{$\mathbf{c}^{(1)}$}
\psfrag{cm}[c][][0.9][0]{$\mathbf{c}^{(m-1)}$}
\psfrag{buff}[c][][0.9][90]{Buffer}
\psfrag{cj}[c][][0.9][0]{$(\mathbf{c}_j)_{0 \leq j < n}$}
\psfrag{map}[c][][0.9][0]{$Q(\cdot)$}\includegraphics[scale=0.3]{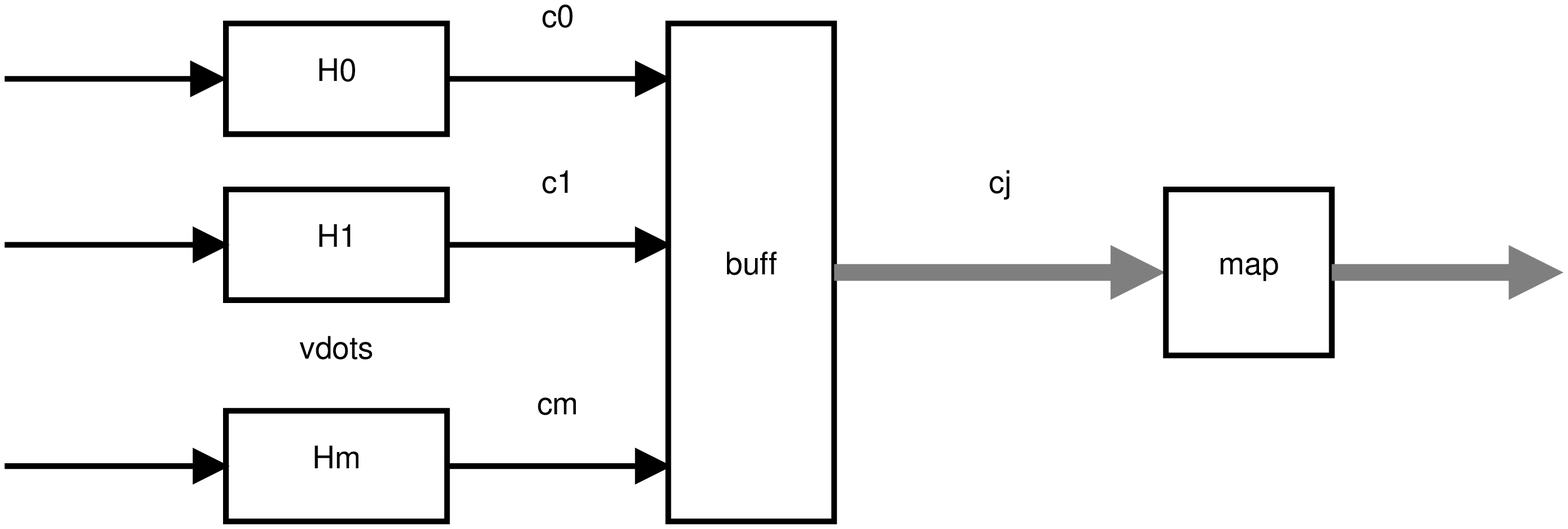}
\par\end{centering}

\centering{}\caption{Block diagram for a general multilevel encoder.}
\label{fig:Diagram1}
\end{figure}

\section{Multilevel Construction using\protect \\
Rank-Metric Codes\label{sec:multishot-rank-metric}}

In this section, we aim at constructing a multishot rank-metric code $\boldsymbol{\mathcal{R}}\subseteq(\mathbb{F}_{q^{M}}^{N})^{n}$
which, by lifting of each component of its codewords, can be converted to a multishot
block matrix code $\boldsymbol{\mathcal{X}}\subseteq(\mathbb{F}_{q}^{N\times T})^{n}$
with $T=N+M$ (\textsl{cf.} Section~\ref{sec:multishot}). To this end, we particularize
the multilevel construction described earlier to the case where the metric space~$\mathcal{M}$
with $d_{\mathrm{M}}(\cdot,\cdot)$ is, in fact, the space $\mathbb{F}_{q^{M}}^{N}$
with the rank distance $d_{\mathrm{R}}(\cdot,\cdot)$. The finite subset $\mathcal{S}\subseteq\mathcal{M}$
will then be a $q^{M}$-ary $[N,K,D]$ linear rank-metric code $\mathcal{R}\subseteq\mathbb{F}_{q^{M}}^{N}$.
Furthermore, the multilevel partitioning $\Gamma_{0},\ldots,\Gamma_{m}$ will be
obtained by a technique called \emph{coset partitioning} \cite[Section~4.5]{lin-costello},
described next.

\begin{figure*}
\begin{centering}
\psfrag{H0}[cc][B][1.0][0]{Enc $\mathcal{H}_0$}
\psfrag{H1}[cc][B][1.0][0]{Enc $\mathcal{H}_1$}
\psfrag{Hm}[cc][B][1.0][0]{Enc $\mathcal{H}_{m-1}$}
\psfrag{vdots}[][][1.0][0]{$\vdots$}
\psfrag{c0}[cc][][1.0][0]{$\mathbf{c}^{(0)} \equiv (\mathbf{m}^{(0)}_j)_{0 \leq j < n}$}
\psfrag{c1}[cc][][1.0][0]{$\mathbf{c}^{(1)} \equiv (\mathbf{m}^{(1)}_j)_{0 \leq j < n}$}
\psfrag{cm}[cc][][1.0][0]{$\mathbf{c}^{(m-1)} \equiv (\mathbf{m}^{(m-1)}_j)_{0 \leq j < n}$}
\psfrag{coset0}[cc][B][1.0][0]{Enc $[\mathcal{R}_0 / \mathcal{R}_1]$}
\psfrag{coset1}[cc][B][1.0][0]{Enc $[\mathcal{R}_1 / \mathcal{R}_2]$}
\psfrag{cosetm}[cc][B][1.0][0]{Enc $[\mathcal{R}_{m-1} / \mathcal{R}_m]$}
\psfrag{v0}[cc][][1.0][0]{$(\mathbf{v}^{(0)}_j)_{0 \leq j < n}$}
\psfrag{v1}[cc][][1.0][0]{$(\mathbf{v}^{(1)}_j)_{0 \leq j < n}$}
\psfrag{vm}[cc][][1.0][0]{$(\mathbf{v}^{(m-1)}_j)_{0 \leq j < n}$}
\psfrag{u}[cc][][1.0][0]{$(\mathbf{u}_j)_{0 \leq j < n}$}\includegraphics[scale=0.4]{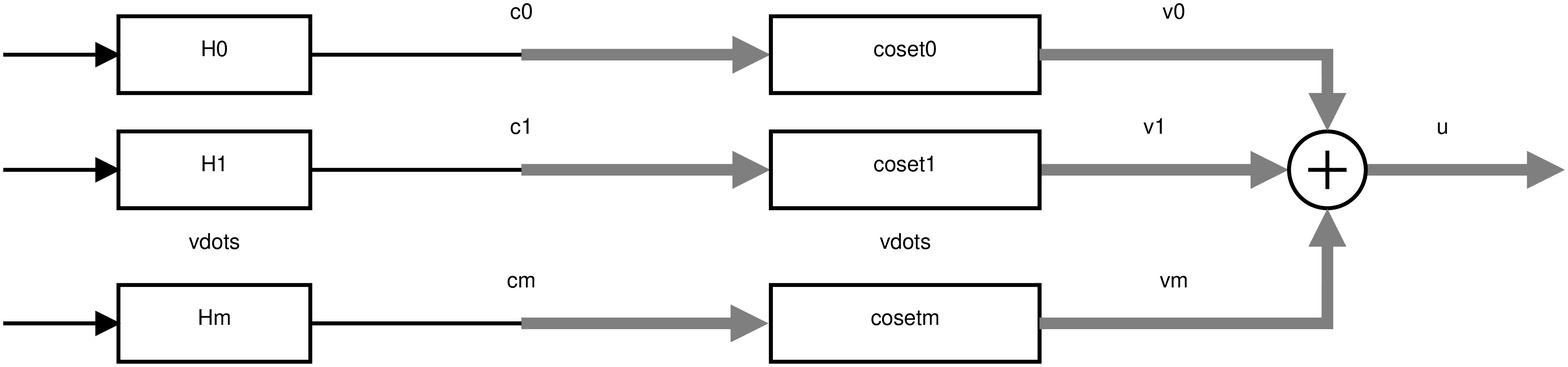}
\par\end{centering}

\caption{Block diagram for a generalized concatenated coding system.}
\label{fig:Diagram2}
\end{figure*}

\subsection{Coset Partitioning}

Let\[
\mathbf{G}=\left[\begin{array}{cccc}
g_{0,0} & g_{0,1} & \cdots & g_{0,K-1}\\
g_{1,0} & g_{1,1} & \cdots & g_{1,K-1}\\
\vdots & \vdots & \ddots & \vdots\\
g_{N-1,0} & g_{N-1,1} & \cdots & g_{N-1,K-1}\end{array}\right]\]
be a generating matrix%
\footnote{We adopt the convention that the code is the \emph{column} space of~$\mathbf{G}$.%
} for $\mathcal{R}$ and let $K=K_{0}>\cdots>K_{m}=0$ be a strictly decreasing sequence
of integers. For $0\leq i\leq m$, define $\mathcal{R}_{i}$ to be the linear code
generated by $\mathbf{G}[0:K_{i}-1]$ (\textsl{i.e.}, the first $K_{i}$ columns
of~$\mathbf{G}$) and $\bar{\mathcal{R}}_{i}$ to be the linear code generated by
$\mathbf{G}[K_{i}:K-1]$ (\textsl{i.e.}, the last $K-K_{i}$ columns of~$\mathbf{G}$).
Then\[
\Gamma_{i}=\{\mathcal{R}_{i}+\mathbf{v}:\mathbf{v}\in\bar{\mathcal{R}}_{i}\}\]
defines an $m$-level partitioning $\Gamma_{0},\ldots,\Gamma_{m}$ of $\mathcal{R}$
with minimum intrasubset distances $d_{\mathrm{R}}^{(i)}=d_{\mathrm{R}}(\mathcal{R}_{i})$,
for $0\leq i<m$. In other words, $\Gamma_{i}$ consists of cosets of $\mathcal{R}_{i}$
having as coset leaders the elements of $\bar{\mathcal{R}}_{i}$. Note that we have
$\Gamma_{0}=\{\mathcal{R}\}$ and $\Gamma_{m}=\{\{\mathbf{u}\}:\mathbf{u}\in\mathcal{R}\}$.
Additionally, each level $i$, for $0\leq i<m$, is nested, with every node at level~$i$
having \[
p_{i}=\frac{\left|\Gamma_{i+1}\right|}{\left|\Gamma_{i}\right|}=\frac{(q^{M})^{K-K_{i+1}}}{(q^{M})^{K-K_{i}}}=q^{M(K_{i}-K_{i+1})}\]
children.

\subsection{Generalized Concatenation and Encoding Procedure}

The multilevel partitioning $\Gamma_{0},\ldots,\Gamma_{m}$ just constructed, along
with suitable $p_{i}$-ary Hamming-metric component codes $\mathcal{H}_{0},\ldots,\mathcal{H}_{m-1}$
gives rise to a multishot rank-metric code $\boldsymbol{\mathcal{R}}\subseteq(\mathbb{F}_{q^{M}}^{N})^{n}$
with cardinality given by \eqref{eq:multilevel-cardinality-1} and minimum distance
satisfying \eqref{eq:multilevel-distance}. Nevertheless, the construction---as presented
in Section~\ref{sec:multilevel}---does not specify any efficient encoding or decoding
procedure. To this end, we will make use of the connection between the multilevel
coding theory and \emph{generalized concatenated codes} (also known as \emph{multilevel
concatenated codes})~\cite[Chapter~15]{lin-costello}.

Let $\mathbf{c}_{0},\ldots,\mathbf{c}_{n-1}$ be the output of the buffer in Figure~\ref{fig:Diagram1}.
As said before, each $\mathbf{c}_{j}$ represents a path in the rooted tree starting
at $\mathcal{R}\in\Gamma_{0}$ and ending at the leaf $\{\mathbf{u}_{j}\}=Q(\mathbf{c}_{j})\in\Gamma_{m}$.
Recall that $p_{i}=q^{M(K_{i}-K_{i+1})}$. In view of that, each component $c_{j}^{(i)}\in\mathbb{F}_{p_{i}}$
of $\mathbf{c}_{j}$ can also be viewed as a $(K_{i}-K_{i+1})$-tuple with elements
in~$\mathbb{F}_{q^{M}}$. Denote this tuple by $\mathbf{m}_{j}^{(i)}\in\mathbb{F}_{q^{M}}^{K_{i}-K_{i+1}}$.
This suggests us to define\[
\mathbf{u}_{j}\triangleq\sum_{i=0}^{m-1}\mathbf{G}[K_{i+1}:K_{i}-1]\cdot\mathbf{m}_{j}^{(i)}=\sum_{i=0}^{m-1}\mathbf{v}_{j}^{(i)},\]
where each $\mathbf{v}_{j}^{(i)}=\mathbf{G}[K_{i+1}:K_{i}-1]\cdot\mathbf{m}_{j}^{(i)}$
can be viewed as the codeword associated with message $\mathbf{m}_{j}^{(i)}$ of
the linear code generated by matrix $\mathbf{G}[K_{i+1}:K_{i}-1]$ (\textsl{i.e.},
column $K_{i+1}$ up to, and including, column $K_{i}-1$ of $\mathbf{G}$). This
code is denoted by $[\mathcal{R}_{i}/\mathcal{R}_{i+1}]$ and, since it contains
coset leaders for partition $\mathcal{R}_{i}/\mathcal{R}_{i+1}$ (\textsl{i.e.},
$\mathcal{R}_{i+1}\subseteq\mathcal{R}_{i}$ and its cosets), it is called a \emph{coset
code} \cite[Section~15.2]{lin-costello}.

Thus, encoding can be summarized in the following steps, illustrated in Figure~\ref{fig:Diagram2}.
\begin{enumerate}
\item Let $\mathbf{c}^{(i)}\in\mathcal{H}_{i}$ be a codeword of~$\mathcal{H}_{i}$, for
$0\leq i<m$.
\item Translate each codeword $\mathbf{c}^{(i)}=(c_{0}^{(i)},\ldots,c_{n-1}^{(i)})$ into
$(\mathbf{m}_{0}^{(i)},\ldots,\mathbf{m}_{n-1}^{(i)})$.
\item Encode each $\mathbf{m}_{j}^{(i)}\in\mathbb{F}_{q^{M}}^{K_{i}-K_{i+1}}$, $0\leq j<n$,
using $\mathbf{G}[K_{i+1}:K_{i}-1]$ to form $\mathbf{v}_{j}^{(i)}\in\mathbb{F}_{q^{M}}^{N}$.
\item Finally, the $j$-th coordinate of the codeword is calculated according to $\mathbf{u}_{j}=\sum_{i=0}^{m-1}\mathbf{v}_{j}^{(i)}$.
\end{enumerate}

In the terminology of generalized concatenated codes, the component codes $\mathcal{H}_{i}$
are called \emph{outer codes}, while the coset codes $[\mathcal{R}_{i}/\mathcal{R}_{i+1}]$
are the \emph{inner codes}.

\subsection{A Special Situation}

Consider the special situation in which $m=K$ and $K_{i}=K-i$ in a way that $p_{i}=q^{M}$
for $0\leq i<m$. If, in addition, (\textsl{i})~every rank-metric code $\mathcal{R}_{i}$
is maximum rank distance, and (\textsl{ii})~every ($q^{M}$-ary) component code
$\mathcal{H}_{i}$ is maximum Hamming distance separable with distance $d_{\mathrm{H}}^{(i)}=\left\lceil d/d_{\mathrm{R}}^{(i)}\right\rceil $,
then we have that \[
d_{\mathrm{R}}^{(i)}=N-K_{i}+1=N-K+i+1\]
and\[
\log_{q^{M}}\left|\mathcal{H}_{i}\right|=n-\left\lceil \frac{d}{d_{\mathrm{R}}^{(i)}}\right\rceil +1.\]

The first condition is always achievable with Gabidulin codes whenever $N\leq M$
(this becomes clear from the structure of a generating matrix for Gabidulin codes~\cite{gabidulin}).
The second condition is also achievable if $n<q^{M}$ (\textsl{e.g.}, with Reed-Solomon
codes), which is typically true. Hence, in view of~\eqref{eq:multilevel-distance}
and~\eqref{eq:multilevel-cardinality-1}, we get a multishot rank-metric code~$\boldsymbol{\mathcal{R}}$
with minimum distance $d_{\mathrm{R}}(\boldsymbol{\mathcal{R}})=d$ and cardinality\begin{eqnarray}
\log_{q}\left|\boldsymbol{\mathcal{R}}\right| & = & \sum_{i=0}^{m-1}\log_{q}\left|\mathcal{H}_{i}\right|\label{eq:bound}\\
 & = & \sum_{i=0}^{K-1}M\left(n+1-\left\lceil \frac{d}{N-K+i+1}\right\rceil \right)\nonumber \\
 & = & MK(n+1)-M\sum_{i=0}^{K-1}\left\lceil \frac{d}{N-K+i+1}\right\rceil ,\nonumber \end{eqnarray}
this value---after maximized over all $K\in\{0,\ldots,N\}$---being an upper bound
on the size of any multishot rank-metric code constructed using the proposed method.

\begin{figure*}
\begin{centering}
\psfrag{r}[cc][][1.0][0]{$(\mathbf{r}^{(i)}_j)_{0 \leq j < n}$}
\psfrag{LE}[cc][][1.0][0]{$(\mathbf{L}'_j, \mathbf{E}'_j)_{0 \leq j < n}$}
\psfrag{u_tilde}[cc][][1.0][0]{$(\tilde{\mathbf{u}}^{(i)}_j)_{0 \leq j < n}$}
\psfrag{v_tilde}[cc][][1.0][0]{$(\tilde{\mathbf{v}}^{(i)}_j)_{0 \leq j < n}$}
\psfrag{m_tilde}[cc][][1.0][0]{$(\tilde{\mathbf{m}}^{(i)}_j)_{0 \leq j < n} \equiv \tilde{\mathbf{c}}^{(i)}$}
\psfrag{c_hat}[cc][][1.0][0]{$\hat{\mathbf{c}}^{(i)}$}
\psfrag{decR}[cc][B][1.0][0]{Dec $\mathcal{R}_i$}
\psfrag{decH}[cc][B][1.0][0]{Dec $\mathcal{H}_i$}
\psfrag{coset_id}[cc][B][1.0][0]{Coset Leader}
\psfrag{inversemap}[cc][B][1.0][0]{$\mathrm{Enc}^{-1}$ $[\mathcal{R}_i / \mathcal{R}_{i+1}]$}\includegraphics[scale=0.4]{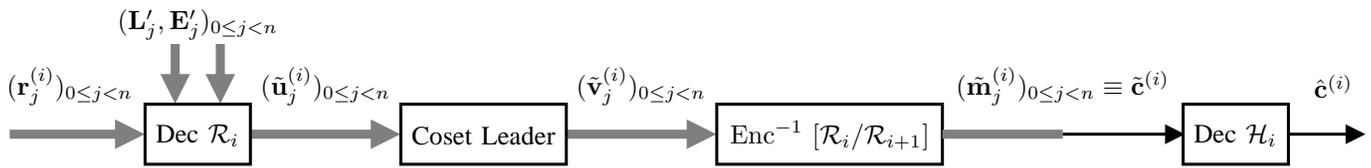}
\par\end{centering}

\caption{Block diagram for concatenated decoder at $i$-th stage.}
\label{fig:Diagram3}
\end{figure*}

\subsection{Hard-Decision Multistage Decoding}

We now suggest a sub-optimal hard-decision multistage decoding algorithm \cite[Section~15.2]{lin-costello}
for the case when all $\mathcal{R}_{i}$ are Gabidulin codes. Let $\boldsymbol{\mathcal{X}}\subseteq(\mathbb{F}_{q}^{N\times T})^{n}$
be a multishot matrix code obtained by the multilevel construction just described.
Following the model in Section~\ref{sec:multishot}, let $(\mathbf{X}_{0},\ldots,\mathbf{X}_{n-1})\in\boldsymbol{\mathcal{X}}$
be the transmitted codeword and $(\mathbf{Y}_{0},\ldots,\mathbf{Y}_{n-1})\in(\mathbb{F}_{q}^{N\times T})^{n}$
be the received sequence.

The multistage decoding occurs in $m$ stages; we start by finding the reduction
$(\mathbf{r}_{j},\mathbf{L}_{j}',\mathbf{E}_{j}')\in\mathbb{F}_{q^{M}}^{N}\times\mathbb{F}_{q}^{N\times\mu}\times\mathbb{F}_{q}^{\delta\times M}$
of each $\mathbf{Y}_{j}\in\mathbb{F}_{q}^{T\times M}$ and setting $\mathbf{r}_{j}^{(0)}=\mathbf{r}_{j}$.
The decoding then proceeds in an iterative fashion. For $0\leq i<m$, at the $i$-th
stage, the following steps are executed (Figure~\ref{fig:Diagram3}).
\begin{enumerate}
\item \emph{Inner decoding}. Using the generalized decoding method of Silva\textsl{ et
al.}~\cite{silva-rank-metric}, decode $(\mathbf{r}_{j}^{(i)},\mathbf{L}_{j}',\mathbf{E}_{j}')$
into a codeword $\tilde{\mathbf{u}}_{j}^{(i)}\in\mathcal{R}_{i}$. Of course $\tilde{\mathbf{u}}_{j}^{(i)}$
belongs to some coset in $\mathcal{R}_{i}/\mathcal{R}_{i+1}$. Identify the leader~$\tilde{\mathbf{v}}_{j}^{(i)}\in[\mathcal{R}_{i}/\mathcal{R}_{i+1}]$
of this coset and find---by inverse mapping---the message $\tilde{\mathbf{m}}_{j}^{(i)}\in\mathbb{F}_{q^{M}}^{K_{i}-K_{i+1}}$
that generated it.
\item After $n$ inner decodings we obtain $(\tilde{\mathbf{m}}_{0}^{(i)},\ldots,\tilde{\mathbf{m}}_{n-1}^{(i)})$.
Similarly to the encoding procedure, define $\tilde{\mathbf{c}}^{(i)}$ as the vector
in $\mathbb{F}_{p_{i}}^{n}$ corresponding to $(\tilde{\mathbf{m}}_{0}^{(i)},\ldots,\tilde{\mathbf{m}}_{n-1}^{(i)})$.
\item \emph{Outer decoding}. Note that, due to errors, $\tilde{\mathbf{c}}^{(i)}$ may
not be a codeword of $\mathcal{H}_{i}$. Therefore, decode $\tilde{\mathbf{c}}^{(i)}$
into the closest (in Hamming sense) codeword $\hat{\mathbf{c}}^{(i)}\in\mathcal{H}_{i}$.
\end{enumerate}

At the end of the iteration, each $\mathbf{r}_{j}^{(i)}$ is updated to\[
\mathbf{r}_{j}^{(i+1)}=\mathbf{r}_{j}^{(i)}-\hat{\mathbf{v}}_{j}^{(i)},\]
where each $\hat{\mathbf{v}}_{j}^{(i)}$ is obtained from $\hat{\mathbf{c}}^{(i)}$
according to $[\mathcal{R}_{i}/\mathcal{R}_{i+1}]$ (just like in Steps~2 and~3
of the encoding procedure) and decoding proceeds to the next stage. After $m$ steps,
we have $\hat{\mathbf{c}}^{(0)},\ldots,\hat{\mathbf{c}}^{(m-1)}$. Figure~\ref{fig:Diagram4}
illustrates the whole process.

\begin{figure}
\begin{centering}
\psfrag{r}[cc][][1.0][0]{$(\mathbf{r}_j)_{0 \leq j < n}$}
\psfrag{r0}[cc][][1.0][0]{$(\mathbf{r}^{(0)}_j)_{0 \leq j < n}$}
\psfrag{r1}[cc][][1.0][0]{$(\mathbf{r}^{(1)}_j)_{0 \leq j < n}$}
\psfrag{r2}[cc][][1.0][0]{$(\mathbf{r}^{(2)}_j)_{0 \leq j < n}$}
\psfrag{rm}[cc][][1.0][0]{$(\mathbf{r}^{(m-1)}_j)_{0 \leq j < n}$}
\psfrag{vdots}[][][1.0][0]{$\vdots$}
\psfrag{-}[][][1.0][0]{$-$}
\psfrag{c0}[cc][][1.0][0]{$\hat{\mathbf{c}}^{(0)}$}
\psfrag{c1}[cc][][1.0][0]{$\hat{\mathbf{c}}^{(1)}$}
\psfrag{cm}[cc][][1.0][0]{$\hat{\mathbf{c}}^{(m-1)}$}
\psfrag{concat0}[cc][B][1.0][0]{ConcatDec $0$}
\psfrag{concat1}[cc][B][1.0][0]{ConcatDec $1$}
\psfrag{concatm}[cc][B][1.0][0]{ConcatDec $m-1$}
\psfrag{coset0}[cc][B][1.0][0]{Enc $[\mathcal{R}_0 / \mathcal{R}_1]$}
\psfrag{coset1}[cc][B][1.0][0]{Enc $[\mathcal{R}_1 / \mathcal{R}_2]$}
\psfrag{v0}[cc][][1.0][0]{$(\hat{\mathbf{v}}^{(0)}_j)_{0 \leq j < n}$}
\psfrag{v1}[cc][][1.0][0]{$(\hat{\mathbf{v}}^{(1)}_j)_{0 \leq j < n}$}\includegraphics[scale=0.4]{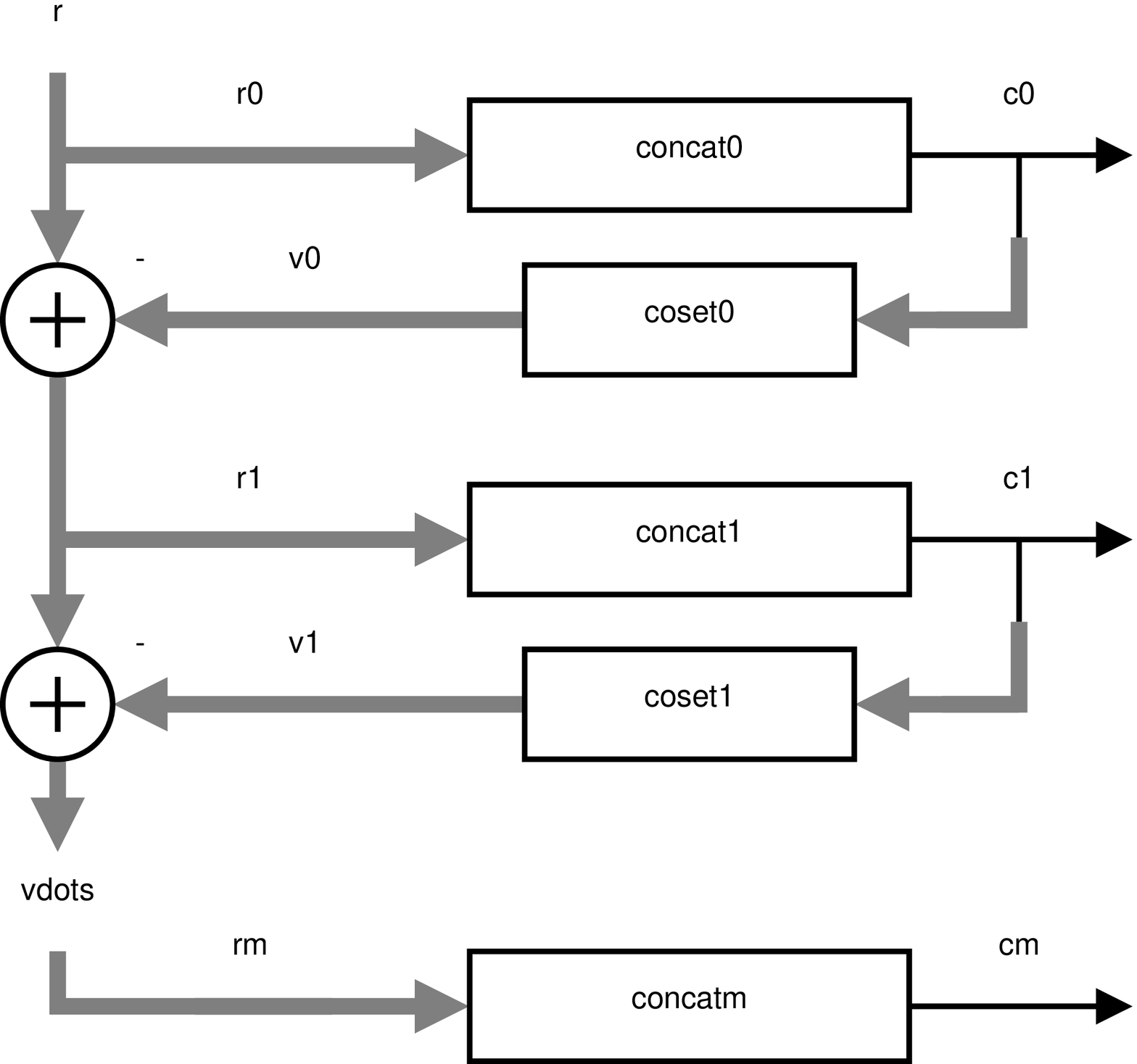}
\par\end{centering}

\caption{Block diagram for hard-decision multistage decoder. Each one of the $\mathrm{ConcatDec}$
blocks is detailed in Figure~\ref{fig:Diagram3}.}
\label{fig:Diagram4}
\end{figure}

\section{Conclusion\label{sec:conclusion}}

This work presented a bit more explicit multilevel construction of multishot codes
for network coding than that introduced in~\cite{multishot}. A natural question
arises: how good are the proposed codes? This demands a comparison of~\eqref{eq:bound}
with known bounds or previous one-shot constructions.

Another open problem is to adapt soft-decision multistage decoding algorithms~\cite[Section~15.3]{lin-costello}
to the current scenario. In particular, the metric is now the rank distance (as opposed
to the Euclidean distance in the case of codes for the AWGN channel). This can possibly
take advantage of list decoding of rank-metric codes.

\section*{Acknowledgment}

The authors would like to thank Danilo Silva for helpful discussions, and CAPES
(Brazil) and CNPq (Brazil) for bibliographical research and financial support.

\bibliographystyle{ieeetr}
\bibliography{../biblio}

\end{document}